\newtheorem{theorem}{Theorem}
\newtheorem{definition}{Definition}
\newtheorem{corollary}{Corollary}
\newtheorem{proposition}{Proposition}
\newtheorem{remark}{Remark}
\newtheorem{example}{Example}
\title{Placement Semantics for Distributed Deep Learning:\\A Systematic Framework for Analyzing Parallelism Strategies}
\author{Deep Pankajbhai Mehta\\Adobe Inc.}
\date{}
\begin{document}

\maketitle

\begin{abstract}
Training large language models requires distributing computation across many accelerators, yet practitioners select parallelism strategies (data, tensor, pipeline, ZeRO) through trial and error because no unified systematic framework predicts their behavior. We introduce placement semantics: each strategy is specified by how it places four training states (parameters, optimizer, gradients, activations) across devices using five modes (replicated, sharded, sharded-with-gather, materialized, offloaded). From placement alone, without implementation details, we derive memory consumption and communication volume. Our predictions match published results exactly: ZeRO-3 uses $8\times$ less memory than data parallelism at $1.5\times$ communication cost, as reported in the original paper. We prove two conditions (gradient integrity, state consistency) are necessary and sufficient for distributed training to match single-device results, and provide composition rules for combining strategies safely. The framework unifies ZeRO Stages 1--3, Fully Sharded Data Parallel (FSDP), tensor parallelism, and pipeline parallelism as instances with different placement choices.
\end{abstract}

\section{Introduction}

We address a gap between practice and theory in distributed deep learning. Training a 70-billion parameter model requires approximately 1120 GB of memory for model state alone \cite{touvron2023llama}, far exceeding the 80 GB capacity of current GPUs. Practitioners must distribute this state across devices using parallelism strategies: data parallelism (DP) \cite{li2020pytorch}, ZeRO/Fully Sharded Data Parallel (FSDP) \cite{rajbhandari2020zero, zhao2023pytorch}, tensor parallelism (TP) \cite{shoeybi2019megatron}, pipeline parallelism (PP) \cite{huang2019gpipe}, and expert parallelism \cite{fedus2022switch}.

Each strategy is described through its implementation: communication operations, data structure layouts, and runtime optimizations. This implementation-centric view makes it difficult to answer fundamental questions:
\begin{itemize}
    \item What precisely distinguishes ZeRO Stage 2 from Stage 3?
    \item Given a new configuration, how much memory will each device use?
    \item When can we safely combine tensor parallelism with pipeline parallelism?
    \item What properties must hold for distributed training to match single-device results?
\end{itemize}

Our framework answers these questions precisely. For example, we show that ZeRO Stage 2 and Stage 3 differ in exactly one placement choice (parameters: replicated vs.\ sharded-with-gather). From this difference alone, we derive that Stage 3 reduces memory from 1120 GB to 140 GB per device, an $8\times$ reduction, while increasing communication by $1.5\times$. These predictions match the original ZeRO paper exactly.

\subsection{Our Contribution}

We introduce placement semantics, a systematic framework that answers these questions. The framework rests on three ideas:

\textbf{Training state is the primitive.} We identify four states that every training configuration manages: parameters $\Theta$, optimizer state $\Omega$, gradients $G$, and activations $A$. These are the fundamental objects of distributed training.

\textbf{Placement is the specification.} For each state, we define its placement: which devices hold which portions. We formalize five placement modes with precise semantics: replicated ($R$), sharded ($S$), sharded-with-gather ($S^*$), materialized ($M$), and offloaded ($O$). The five modes arise because sharding has two variants: pure sharding where each device uses only its local shard, and sharded-with-gather where shards are temporarily reassembled for computation. This distinction is critical: it separates ZeRO Stage 2 (pure sharding of gradients) from Stage 3 (sharded-with-gather for parameters). We restrict to these five modes as they cover all strategies in current practice; intermediate modes (e.g., $k$-way replication for $1 < k < N$) are straightforward extensions.

\textbf{Costs derive from placement.} Given a placement specification, we derive memory and communication through formal rules. This is our key technical result: implementation details are unnecessary for resource prediction.

\subsection{What Is New}

While prior work describes specific systems, we contribute systematic foundations that enable reasoning across systems:
\begin{enumerate}
    \item Systematic placement semantics with precise definitions of modes (Section 3)
    \item Derivation rules computing memory and communication from specifications (Section 4)
    \item Correctness conditions with proofs of necessity and sufficiency (Section 5)
    \item Composition calculus for combining strategies (Section 6)
\end{enumerate}

Prior work describes systems. We provide a systematic framework in which those systems are instances. The relationship is analogous to computational complexity theory versus specific algorithms: complexity theory provides tools to analyze any algorithm, while algorithm papers describe specific solutions.

\textbf{Validation.} We validate our framework against published results from the ZeRO paper \cite{rajbhandari2020zero}. Our derivation rules predict the same memory reduction ($8\times$) and communication overhead ($1.5\times$) reported by the original authors, confirming that placement specifications capture real system behavior (Section 7).

\section{Background}

We establish notation and review what consumes memory during training. We use standard terminology: FP16 and FP32 denote 16-bit and 32-bit floating-point formats respectively; SGD denotes stochastic gradient descent; NVMe denotes Non-Volatile Memory Express storage.

\begin{table}[t]
\centering
\caption{Memory requirements for training a 70B parameter model with Adam optimizer using mixed-precision training. Following the ZeRO paper's accounting \cite{rajbhandari2020zero}, we include FP32 master weights. For derivation purposes, we group master weights with optimizer state, giving $|\Omega| = 12P$ bytes total.}
\label{tab:memory}
\begin{tabular}{llll}
\toprule
State & Count & Precision & Memory \\
\midrule
Parameters $\Theta$ & $P$ & FP16 & 140 GB \\
Master weights & $P$ & FP32 & 280 GB \\
Optimizer $\Omega$ (Adam $m$, $v$) & $2P$ & FP32 & 560 GB \\
Gradients $G$ & $P$ & FP16 & 140 GB \\
\midrule
\textbf{Model state total} & & & \textbf{1120 GB} \\
\bottomrule
\end{tabular}
\end{table}

\subsection{Training State}

A training step transforms parameters $\Theta_t$ to $\Theta_{t+1}$ using a batch of data. This requires maintaining four state tensors.

\textbf{Parameters} $\Theta \in \mathbb{R}^P$ are the model weights. For a transformer with $L$ layers and hidden dimension $H$, the parameter count is approximately $P \approx 12LH^2$ \cite{kaplan2020scaling}.\footnote{This approximation holds for large $H$ and omits embedding parameters, which add approximately $V \cdot H$ where $V$ is vocabulary size. For a 70B model with typical vocabulary, embeddings contribute roughly 1--2\% of total parameters.} Each attention layer contributes $4H^2$ parameters (query, key, value, output projections) and each feed-forward layer contributes $8H^2$ parameters (two matrices with 4$\times$ expansion).

\textbf{Optimizer state} $\Omega$ contains auxiliary values maintained by the optimizer. Adam \cite{kingma2015adam} stores first moment $m \in \mathbb{R}^P$ and second moment $v \in \mathbb{R}^P$, giving $|\Omega| = 2P$. These are stored in FP32 for numerical stability \cite{micikevicius2018mixed}.

\textbf{Gradients} $G \in \mathbb{R}^P$ are the derivatives $\nabla_\Theta \mathcal{L}$ computed during backpropagation.

\textbf{Activations} $A$ are intermediate values from the forward pass needed for gradient computation. Their size depends on batch size $B$, sequence length $S$, and architecture details.

\subsection{Memory Accounting}

Table \ref{tab:memory} shows concrete memory requirements following the ZeRO paper's mixed-precision accounting \cite{rajbhandari2020zero}. The key observation is that optimizer state dominates: Adam requires $2P$ values in FP32, which is $8P$ bytes versus $2P$ bytes for FP16 parameters. Including FP32 master weights (required for mixed-precision training stability), the total is 16 bytes per parameter.

\begin{remark}[Memory Accounting Convention]
Throughout this paper, we use the ZeRO paper's convention of 16 bytes per parameter: 2 bytes (FP16 parameters) + 2 bytes (FP16 gradients) + 4 bytes (FP32 master weights) + 8 bytes (FP32 optimizer state). When we write $|\Theta|$, $|\Omega|$, $|G|$, we refer to memory footprint in bytes, not parameter count. Specifically: $|\Theta| = 2P$ bytes, $|G| = 2P$ bytes, and $|\Omega| = 12P$ bytes (master weights + Adam states).
\end{remark}

\subsection{Communication Primitives}

Distributed training uses collective communication operations. We use standard cost models \cite{sergeev2018horovod}. For $N$ devices and tensor size $|T|$:

\textbf{All-Reduce} aggregates (sums) $T$ across devices and distributes the result to all. Using the ring algorithm, each device sends and receives $2 \cdot \frac{N-1}{N} \cdot |T|$ bytes.

\textbf{Reduce-Scatter} aggregates $T$ and distributes disjoint shards. Device $i$ receives shard $i$ of the sum. Cost: $\frac{N-1}{N} \cdot |T|$ bytes per device.

\textbf{All-Gather} collects shards and distributes the complete tensor to all. Cost: $\frac{N-1}{N} \cdot |T|$ bytes per device.

\section{Placement Semantics}

We now present the systematic framework. We begin with intuition, then give precise definitions.

\subsection{Intuition}

Consider training on $N = 8$ devices. For parameters $\Theta$, we have choices:
\begin{itemize}
    \item Every device stores a full copy (data parallelism)
    \item Each device stores $1/8$ of parameters (ZeRO Stage 3)
    \item No device stores parameters persistently; gather them when needed (FSDP with aggressive sharding)
\end{itemize}

Each choice has different memory and communication implications. We formalize these choices as placement modes.

\subsection{Placement Modes}

\begin{definition}[Placement Mode]
\label{def:placement}
Let $X$ be a state tensor of size $|X|$ distributed across $N$ devices indexed $\{0, \ldots, N-1\}$. A placement mode $\pi$ specifies, for each device $i$, what portion of $X$ device $i$ stores persistently. We define five modes:

\textbf{Replicated ($R$):} Every device stores the complete tensor.
\begin{equation}
\pi_R(X, i) = X \quad \text{for all } i \in \{0, \ldots, N-1\}
\end{equation}

\textbf{Sharded ($S$):} The tensor is partitioned into $N$ contiguous shards; device $i$ stores shard $i$.
\begin{equation}
\pi_S(X, i) = X\left[\frac{i \cdot |X|}{N} : \frac{(i+1) \cdot |X|}{N}\right]
\end{equation}

\textbf{Sharded with Gather ($S^*$):} Like $S$, but before each use the full tensor is reconstructed via All-Gather, used, then the non-local portions are discarded. This captures ZeRO-3/FSDP parameter handling.
\begin{equation}
\pi_{S^*}(X, i) = X\left[\frac{i \cdot |X|}{N} : \frac{(i+1) \cdot |X|}{N}\right] \text{ (persistent)}, \quad X \text{ (transient during use)}
\end{equation}

\textbf{Materialized ($M$):} No device stores $X$ persistently. When $X$ is needed, it is reconstructed from other state, used, then discarded. This applies to intermediate values like activations that can be recomputed.
\begin{equation}
\pi_M(X, i) = \emptyset \quad \text{(persistent storage)}
\end{equation}

\textbf{Offloaded ($O$):} The tensor is stored in CPU memory or NVMe, transferred to GPU when needed.
\begin{equation}
\pi_O(X, i) = \emptyset \quad \text{(GPU memory)}
\end{equation}
\end{definition}

\begin{example}[Data Parallelism]
In data parallelism (DP), all model state is replicated: $\pi_\Theta = R$, $\pi_\Omega = R$, $\pi_G = R$. Each device holds full parameters, full optimizer state, and computes full gradients. After local gradient computation, an All-Reduce synchronizes gradients across devices.
\end{example}

\begin{example}[ZeRO Stage 3]
ZeRO Stage 3 shards everything: $\pi_\Theta = S^*$, $\pi_\Omega = S$, $\pi_G = S$. Parameters are sharded across devices; before each layer's computation, an All-Gather reconstructs the full parameters, which are then discarded after use. Optimizer state and gradients remain sharded throughout.
\end{example}

\begin{table}[t]
\centering
\caption{Placement specifications for common parallelism strategies. For tensor and pipeline parallelism, placement applies per layer or stage. $S^*$ denotes sharded-with-gather before computation. DP = Data Parallelism, TP = Tensor Parallelism, PP = Pipeline Parallelism.}
\label{tab:placements}
\begin{tabular}{lcccc}
\toprule
Strategy & $\pi_\Theta$ & $\pi_\Omega$ & $\pi_G$ & $\pi_A$ \\
\midrule
Data Parallel (DP) & $R$ & $R$ & $R$ & $R$ \\
ZeRO Stage 1 & $R$ & $S$ & $R$ & $R$ \\
ZeRO Stage 2 & $R$ & $S$ & $S$ & $R$ \\
ZeRO Stage 3 / FSDP & $S^*$ & $S$ & $S$ & $R$ \\
ZeRO-Offload & $O$ & $O$ & $S$ & $R$ \\
Tensor Parallel (TP, intra-layer) & $S$ & $S$ & $S$ & $S$ \\
Pipeline Parallel (PP, inter-layer) & $S$ & $S$ & $S$ & $R$ \\
\bottomrule
\end{tabular}
\end{table}

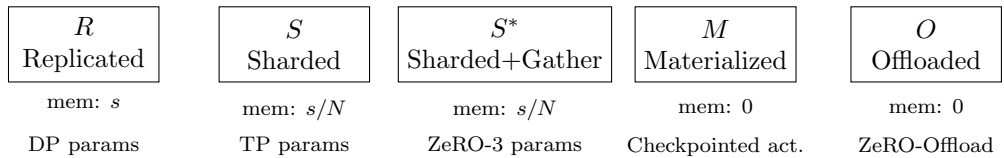
\begin{figure}[t]
\centering
\begin{tikzpicture}[
    mode/.style={rectangle, draw, minimum width=2cm, minimum height=1cm, align=center},
    label/.style={font=\footnotesize}
]
\node[mode] (R) at (0,0) {$R$\\Replicated};
\node[mode] (S) at (2.8,0) {$S$\\Sharded};
\node[mode] (Sstar) at (5.6,0) {$S^*$\\Sharded+Gather};
\node[mode] (M) at (8.4,0) {$M$\\Materialized};
\node[mode] (O) at (11.2,0) {$O$\\Offloaded};

\node[label, below=0.1cm of R] {mem: $s$};
\node[label, below=0.1cm of S] {mem: $s/N$};
\node[label, below=0.1cm of Sstar] {mem: $s/N$};
\node[label, below=0.1cm of M] {mem: $0$};
\node[label, below=0.1cm of O] {mem: $0$};

\node[label, below=0.6cm of R] {DP params};
\node[label, below=0.6cm of S] {TP params};
\node[label, below=0.6cm of Sstar] {ZeRO-3 params};
\node[label, below=0.6cm of M] {Checkpointed act.};
\node[label, below=0.6cm of O] {ZeRO-Offload};
\end{tikzpicture}
\caption{The five placement modes. Top: per-device GPU memory cost ($s$ = tensor size, $N$ = device count). Bottom: example uses in common strategies. $S^*$ (sharded-with-gather) is the key innovation in ZeRO-3/FSDP: parameters are sharded for storage but gathered transiently for computation.}
\label{fig:modes}
\end{figure}

\subsection{Placement Specification}

\begin{definition}[Placement Specification]
A placement specification is a tuple $\Pi = (\pi_\Theta, \pi_\Omega, \pi_G, \pi_A)$ where each $\pi_X \in \{R, S, S^*, M, O\}$ specifies the placement mode for state $X$.
\end{definition}

A parallelism strategy is fully determined by its placement specification. Table \ref{tab:placements} shows specifications for known strategies, and Figure \ref{fig:modes} illustrates the five modes with their memory costs.

\begin{remark}
The materialized mode ($M$) does not appear in Table \ref{tab:placements} because common strategies store all states persistently. However, $M$ enables modeling activation checkpointing, where activations are recomputed rather than stored.
\end{remark}

\section{Derivation Rules}

We now present our main technical contribution: rules that derive memory and communication from placement specifications.

\subsection{Preliminaries}

\begin{definition}[Reconstruction Unit]
\label{def:sunit}
Let $s_{\text{unit}}$ denote the size of the smallest unit that can be independently reconstructed during sharded-with-gather operations. For transformer models, this typically corresponds to one layer: $s_{\text{unit}} = 12H^2 \cdot \text{bytes\_per\_param}$ for a standard transformer layer with hidden dimension $H$. The choice of $s_{\text{unit}}$ is an implementation decision that trades memory for communication granularity.
\end{definition}

\subsection{Memory Derivation}

\begin{theorem}[Memory from Placement]
\label{thm:memory}
Let $\Pi = (\pi_\Theta, \pi_\Omega, \pi_G, \pi_A)$ be a placement specification for $N$ devices. The per-device GPU memory is:
\begin{equation}
M(\Pi) = \mu(\pi_\Theta, |\Theta|) + \mu(\pi_\Omega, |\Omega|) + \mu(\pi_G, |G|) + \mu(\pi_A, |A|)
\end{equation}
where $\mu : \{R, S, S^*, M, O\} \times \mathbb{R}^+ \to \mathbb{R}^+$ is defined as:
\begin{align}
\mu(R, s) &= s \\
\mu(S, s) &= s/N \\
\mu(S^*, s) &= s/N + s_{\text{unit}} \\
\mu(M, s) &= s_{\text{unit}} \\
\mu(O, s) &= 0
\end{align}
where $s_{\text{unit}}$ is defined in Definition \ref{def:sunit}.
\end{theorem}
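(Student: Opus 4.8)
The plan is to prove the claim by decomposing the per-device footprint into an additive sum over the four states and then verifying the five entries of $\mu$ one mode at a time, reading each directly off Definitions~\ref{def:placement} and~\ref{def:sunit}. The additive structure is the organizing principle: on each device, parameters, optimizer state, gradients, and activations occupy logically disjoint buffers, so the total resident GPU memory on device $i$ is the sum of what each state contributes there. First I would make this disjointness explicit and fix the accounting convention, namely that we charge each state its worst-case resident footprint (persistent storage plus any transient buffer live during that state's use) and sum these contributions. This yields $M(\Pi)$ as an expression that is independent of the index $i$, since contiguous sharding produces $N$ shards of equal size $|X|/N$ and hence an identical footprint on every device.

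Next I would compute each $\mu(\pi,s)$ from the mode definition. The replicated case $\mu(R,s)=s$ is immediate from $\pi_R(X,i)=X$; the sharded case $\mu(S,s)=s/N$ follows because $\pi_S$ stores exactly one equal shard of size $|X|/N$; and the offloaded case $\mu(O,s)=0$ holds because $\pi_O(X,i)=\emptyset$ in GPU memory by definition. These three are bookkeeping. The two cases requiring genuine justification are $S^*$ and $M$, both carrying the $s_{\text{unit}}$ term. For $\mu(S^*,s)=s/N+s_{\text{unit}}$ I would separate the persistent component (the local shard of size $s/N$, identical to $S$) from the transient component incurred by the All-Gather that reconstructs the tensor for computation. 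The key step is to invoke Definition~\ref{def:sunit}: reconstruction proceeds one unit at a time, so at any instant only a single gathered unit of size $s_{\text{unit}}$ is resident, not the whole tensor; since each unit is discarded immediately after use, the transient buffer never exceeds $s_{\text{unit}}$, giving $s/N+s_{\text{unit}}$. The materialized case $\mu(M,s)=s_{\text{unit}}$ is the same argument with no persistent component, since $\pi_M(X,i)=\emptyset$ persistently while reconstruction still requires one live unit.

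I expect the main obstacle to be precisely this transient-buffer accounting: pinning down that the gather or recompute cost is $s_{\text{unit}}$ rather than $|X|$ or $|X|/N$, and defending the modeling choice that a single unit is live at a time. This rests entirely on the granularity assumption in Definition~\ref{def:sunit} (unit = one transformer layer in the canonical case) together with the discard-after-use semantics written into the $S^*$ and $M$ definitions. A secondary concern is the summation-versus-peak subtlety: summing the four $\mu$ terms overcounts if the transient buffers of distinct states never coexist in time, so I would state explicitly that $M(\Pi)$ measures the worst-case coexisting footprint. This keeps the rule conservative and, more importantly, composable, which is what later sections will need when combining strategies.
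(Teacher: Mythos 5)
Your proposal is correct and follows essentially the same route as the paper's proof: a per-state additive decomposition followed by a case analysis reading each value of $\mu$ off Definition~\ref{def:placement}, with the $S^*$ and $M$ cases justified by the one-unit-at-a-time reconstruction granularity of Definition~\ref{def:sunit}. Your explicit remarks on worst-case coexisting footprints and device-index independence make assumptions the paper leaves implicit, but they do not change the argument; the only detail you omit is the paper's aside that pipelined gather-compute overlap can raise the transient term to $2 \cdot s_{\text{unit}}$.
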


\begin{proof}
We prove each case from Definition \ref{def:placement}.

\textbf{Case $R$:} By equation (1), $\pi_R(X, i) = X$ for all $i$. Each device stores the full tensor, so per-device memory is $|X| = s$.

\textbf{Case $S$:} By equation (2), device $i$ stores $X[i|X|/N : (i+1)|X|/N]$, which has size $|X|/N = s/N$.

\textbf{Case $S^*$:} By equation (3), device $i$ persistently stores shard $i$ (size $s/N$). During computation, the full tensor is gathered transiently. If gathering happens one unit at a time (as is standard practice), peak memory is $s/N + s_{\text{unit}}$. Note: pipelined implementations that overlap gather with computation may require $2 \cdot s_{\text{unit}}$ transient memory.

\textbf{Case $M$:} By equation (4), persistent storage is $\emptyset$. However, during computation, the tensor must be reconstructed. If reconstruction happens one unit at a time (e.g., one layer), peak transient memory is $s_{\text{unit}}$.

\textbf{Case $O$:} By equation (5), GPU memory is $\emptyset$. The tensor resides in CPU/NVMe, contributing 0 to GPU memory.
\end{proof}

\begin{example}[Memory Calculation]
For a 70B model ($P = 70 \times 10^9$) with $N = 8$ devices, using 16 bytes per parameter (see Remark 1):

\textbf{Data Parallel ($R, R, R, R$):}
\begin{align*}
M_{\text{DP}} &= \mu(R, |\Theta|) + \mu(R, |\Omega|) + \mu(R, |G|) + \mu(R, |A|/N) \\
&= 2P + 12P + 2P + |A|/8 = 16P + |A|/8
\end{align*}
In bytes: $16 \times 70 \times 10^9 = 1120$ GB per device (excluding activations).

\textbf{ZeRO Stage 3 ($S^*, S, S, R$):}
\begin{align*}
M_{\text{Z3}} &= \mu(S^*, 2P) + \mu(S, 12P) + \mu(S, 2P) + \mu(R, |A|/N) \\
&= 2P/N + 12P/N + 2P/N + |A|/8 = 16P/N + |A|/8
\end{align*}
In bytes for $N = 8$: $16P/8 = 2P = 140$ GB per device, an \textbf{8$\times$ reduction}.
\end{example}

\subsection{Communication Derivation}

\begin{theorem}[Communication from Placement]
\label{thm:comm}
Let $\Pi$ be a placement specification. The communication volume per training step is determined by state transitions required for the forward-backward-update cycle:

1. If $\pi_G = R$ and gradients are computed locally, synchronization requires All-Reduce:
\begin{equation}
C^R_{\text{sync}} = 2 \cdot \frac{N-1}{N} \cdot |G|
\end{equation}

2. If $\pi_G = S$, synchronization uses Reduce-Scatter:
\begin{equation}
C^S_{\text{sync}} = \frac{N-1}{N} \cdot |G|
\end{equation}

3. If $\pi_\Theta = S^*$, parameters must be gathered before use. For forward and backward passes:
\begin{equation}
C_{\text{gather}} = 2 \cdot \frac{N-1}{N} \cdot |\Theta|
\end{equation}
\end{theorem}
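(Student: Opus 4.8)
The plan is to treat each of the three cases as a direct instantiation of the collective-cost models stated in Section 2.3, using the placement semantics of Definition \ref{def:placement} to identify which single collective each mode forces. The governing principle I would state first is that communication arises only from the state transitions needed to keep the forward-backward-update cycle consistent across devices: whenever a device computes a quantity locally but the placement demands a different distribution of the \emph{aggregated} result, exactly one collective bridges the gap, and its cost is read directly off the stated model. With this principle fixed, each case reduces to identifying the correct collective and substituting.

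For Case 1 ($\pi_G = R$), I would observe that replicated gradients with local computation means each device produces a full gradient from its own microbatch, and correctness requires the globally summed gradient to be present on \emph{every} device, since $G$ is replicated. The collective that both sums and broadcasts the result to all is All-Reduce, whose per-device cost the text fixes at $2\cdot\frac{N-1}{N}\cdot|G|$; substituting yields $C^R_{\text{sync}}$. I would add a one-line remark that decomposing All-Reduce as a Reduce-Scatter followed by an All-Gather gives the identical total, so the cost is intrinsic to the replicated-output requirement rather than an artifact of the algorithm choice.

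For Case 2 ($\pi_G = S$), the distinction is that device $i$ needs only shard $i$ of the summed gradient and never the whole tensor. Reduce-Scatter is precisely the operation that aggregates contributions and hands each device its disjoint shard, at cost $\frac{N-1}{N}\cdot|G|$, giving $C^S_{\text{sync}}$ and exhibiting the exact halving relative to Case 1. For Case 3 ($\pi_\Theta = S^*$), the sharded-with-gather clause of Definition \ref{def:placement} stipulates that the full parameter tensor is reconstructed by All-Gather before each use and discarded afterward; I would argue that the full tensor is genuinely required twice per step---once in the forward pass to produce activations and once in the backward pass to apply the chain rule---so two All-Gathers occur, each costing $\frac{N-1}{N}\cdot|\Theta|$, summing to $C_{\text{gather}}$.

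The main obstacle is justifying the factor of two in Case 3: I must argue that exactly two gathers are forced, no more and no fewer. The lower bound follows because the discard step after the forward pass (which keeps peak memory at $s/N + s_{\text{unit}}$, matching Theorem \ref{thm:memory}) leaves no full copy available for the backward pass, so a second gather is unavoidable; the upper bound follows because re-sharding is a purely local drop of the non-owned shards and therefore incurs no communication, so no further collectives are needed. A secondary subtlety is reconciling the ``computed locally'' qualifier of Case 1 with the need for parameters on every device: this is automatic, since the replicated-gradient regime presupposes replicated or gathered parameters, which is exactly the DP / ZeRO-1 / ZeRO-2 setting in which that case applies.
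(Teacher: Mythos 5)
Your proposal is correct and follows essentially the same route as the paper's proof: each case identifies the single collective forced by the placement mode (All-Reduce for $\pi_G = R$, Reduce-Scatter for $\pi_G = S$, two All-Gathers for $\pi_\Theta = S^*$) and substitutes the ring-algorithm cost model from Section~2.3. Your added justifications---the Reduce-Scatter-plus-All-Gather decomposition of All-Reduce in Case~1 and the lower/upper-bound argument that exactly two gathers are forced in Case~3 (discard after the forward pass necessitates a re-gather; re-sharding is communication-free)---are sound refinements of the same argument rather than a different approach.
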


\begin{proof}
\textbf{Part 1:} With $\pi_G = R$, each device computes local gradients $G_i$ on its data shard. For correctness (Theorem \ref{thm:gradient}), the final gradient must be $G = \frac{1}{N}\sum_i G_i$. All-Reduce computes this sum and distributes it to all devices. The ring All-Reduce algorithm requires each device to send $(N-1)/N \cdot |G|$ bytes in the scatter phase and $(N-1)/N \cdot |G|$ bytes in the gather phase, totaling $2(N-1)/N \cdot |G|$.

\textbf{Part 2:} With $\pi_G = S$, we need the sum but each device only needs its shard. Reduce-Scatter computes the sum and distributes shard $i$ to device $i$. This requires only the scatter phase of ring All-Reduce: $(N-1)/N \cdot |G|$ bytes.

\textbf{Part 3:} With $\pi_\Theta = S^*$, parameters are sharded for storage but must be complete for computation. Before each layer's forward pass, All-Gather reconstructs parameters from shards. The same reconstruction is needed in the backward pass for gradient computation. Each All-Gather costs $(N-1)/N \cdot |\Theta|$, and two are needed, giving $2(N-1)/N \cdot |\Theta|$.
\end{proof}

\begin{example}[Communication Calculation]
For $P = 70 \times 10^9$ parameters and $N = 8$ devices (gradients in FP16, so $|G| = 2P$ bytes):

\textbf{Data Parallel:} Only gradient synchronization.
\[
C_{\text{DP}} = 2 \cdot \frac{7}{8} \cdot 2P = 3.5P \approx 245 \text{ GB per device}
\]

\textbf{ZeRO Stage 3:} Gradient sync (Reduce-Scatter) + parameter gather (2$\times$ All-Gather).
\[
C_{\text{Z3}} = \frac{7}{8} \cdot 2P + 2 \cdot \frac{7}{8} \cdot 2P = 5.25P \approx 368 \text{ GB per device}
\]

ZeRO Stage 3 communicates \textbf{1.5$\times$ more} than data parallelism but uses \textbf{8$\times$ less memory}.
\end{example}

\subsection{The Fundamental Trade-off}

\begin{corollary}[Memory-Communication Trade-off]
For strategies using modes $\{R, S, S^*\}$, the relationship between memory reduction and communication overhead depends on which state is sharded:

\begin{enumerate}
\item Sharding optimizer state ($R \to S$) reduces memory with no communication increase (updates are local).
\item Sharding gradients ($R \to S$) reduces memory and reduces communication (Reduce-Scatter vs All-Reduce).
\item Sharding parameters ($R \to S^*$) reduces memory but increases communication by $2 \cdot \frac{N-1}{N} \cdot |\Theta|$ (two All-Gathers per step).
\end{enumerate}
\end{corollary}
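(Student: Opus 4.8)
The plan is to read off all three claims directly from Theorem~\ref{thm:memory} (Memory from Placement) and Theorem~\ref{thm:comm} (Communication from Placement), treating each statement as the marginal effect of flipping a single state's placement mode while holding the others fixed. Two structural facts make this clean. First, $M(\Pi)$ is a sum of per-state terms $\mu(\pi_X,|X|)$, so altering one mode changes exactly one summand; the memory delta for each part is just a difference of two values of $\mu$. Second, Theorem~\ref{thm:comm} attributes all communication volume to gradient placement $\pi_G$ (All-Reduce vs.\ Reduce-Scatter) and parameter placement $\pi_\Theta$ (the two All-Gathers under $S^*$), with no term depending on $\pi_\Omega$. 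Each part is then a one-line computation plus a sign check.

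I would dispatch Parts 2 and 3 first, as they are immediate. For Part 2, the memory delta is $\mu(S,|G|)-\mu(R,|G|)=|G|/N-|G|=-\tfrac{N-1}{N}|G|<0$, and the communication delta is the Reduce-Scatter cost minus the All-Reduce cost, $\tfrac{N-1}{N}|G|-2\cdot\tfrac{N-1}{N}|G|=-\tfrac{N-1}{N}|G|<0$, so both memory and communication strictly decrease. For Part 3, Theorem~\ref{thm:comm} Part~3 gives the communication increase as exactly $2\cdot\tfrac{N-1}{N}|\Theta|$ (two gathers, forward and backward), while the memory delta is $\mu(S^*,|\Theta|)-\mu(R,|\Theta|)=\bigl(|\Theta|/N+s_{\text{unit}}\bigr)-|\Theta|$, which is negative whenever $s_{\text{unit}}\ll|\Theta|$ (true for deep models, where one layer is a small fraction of all parameters). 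This reproduces the stated $8\times$-memory / $1.5\times$-communication trade-off of the worked example.

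The delicate case is Part~1, and this is where I expect the real work. The memory side is trivial: $\mu(S,|\Omega|)-\mu(R,|\Omega|)=-\tfrac{N-1}{N}|\Omega|<0$. The claim that communication does not increase cannot be read off a table entry; it must be justified structurally. The argument I would give is that the optimizer update is coordinate-wise---for Adam, the new value at parameter coordinate $j$ depends only on the gradient and moment estimates at coordinate $j$---so device $i$ can update its optimizer shard $i$ and the corresponding parameter coordinates using purely local state, never gathering optimizer state from peers. Consequently no communication term in Theorem~\ref{thm:comm} depends on $\pi_\Omega$, and flipping $R\to S$ for $\Omega$ leaves the volume untouched. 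The subtlety I must handle carefully is bookkeeping: any parameter synchronization that a sharded-optimizer scheme performs is governed by $\pi_\Theta$, not $\pi_\Omega$ (and indeed the identity All-Reduce $=$ Reduce-Scatter $+$ All-Gather shows the total matches data parallelism exactly). Keeping that attribution clean---so that the \emph{marginal} communication effect of sharding $\Omega$ is genuinely zero rather than merely bounded---is the crux of the proof.
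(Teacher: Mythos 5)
Your proposal is correct and follows the same skeleton as the paper's proof: all three parts are read off Theorems~\ref{thm:memory} and~\ref{thm:comm} as the marginal effect of flipping a single state's mode. Where you differ is in rigor, and the differences are worth recording. The paper's proof addresses only the communication side of each part (locality of the optimizer update, All-Reduce vs.\ Reduce-Scatter, two All-Gathers) and never verifies the memory deltas at all; you compute them explicitly via $\mu$, including the caveat that $\mu(S^*,|\Theta|) = |\Theta|/N + s_{\text{unit}}$, so Part~3's ``reduces memory'' claim actually requires $s_{\text{unit}} < \frac{N-1}{N}\,|\Theta|$ --- true for deep models, but a genuine hypothesis the paper leaves tacit. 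More substantively, on Part~1 the paper says only that ``no cross-device communication is needed for the optimizer itself,'' which quietly sidesteps the fact that under ZeRO Stage~1 ($\pi_\Theta = R$, $\pi_\Omega = S$) the locally updated parameter shards must still be All-Gathered to restore replication. Your attribution argument --- that this re-synchronization is chargeable to $\pi_\Theta = R$ rather than to $\pi_\Omega$, together with the volume identity All-Reduce $=$ Reduce-Scatter $+$ All-Gather showing the total exactly matches data parallelism --- is precisely what upgrades ``no communication increase'' from a plausible claim to a statement about the marginal effect of sharding $\Omega$ being zero. Your coordinate-wise observation about Adam (the update at coordinate $j$ depends only on gradient and moments at coordinate $j$) is likewise the correct formal underpinning for the paper's one-line locality assertion. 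In short: same route, but your version patches two small gaps (memory-side verification with the $s_{\text{unit}}$ condition, and the ZeRO-1 parameter re-sync accounting) that the paper's terser proof glosses over.
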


\begin{proof}
\textbf{Part 1:} Optimizer state is only accessed during the update step. With sharding, each device updates its local shard using its local gradient shard. No cross-device communication is needed for the optimizer itself.

\textbf{Part 2:} Gradient synchronization changes from All-Reduce (cost $2 \cdot \frac{N-1}{N} \cdot |G|$) to Reduce-Scatter (cost $\frac{N-1}{N} \cdot |G|$), a 2$\times$ reduction.

\textbf{Part 3:} With $\pi_\Theta = S^*$, parameters must be gathered before forward and backward passes. Each All-Gather costs $\frac{N-1}{N} \cdot |\Theta|$, and two are needed per step.
\end{proof}

This explains the design of ZeRO stages: Stage 1 shards optimizer state (free memory reduction), Stage 2 additionally shards gradients (further memory reduction with communication benefit), and Stage 3 shards parameters (maximum memory reduction but additional communication cost).

\section{Correctness Conditions}

We formalize when distributed training produces correct results.

\begin{definition}[Semantic Equivalence]
A distributed training configuration is semantically equivalent to single-device training if it produces the same sequence of parameter updates $\Theta_0, \Theta_1, \Theta_2, \ldots$, up to floating-point differences arising from reduction order.
\end{definition}

\begin{theorem}[Gradient Integrity]
\label{thm:gradient}
For semantic equivalence, the gradient used for parameter update at step $t$ must equal:
\begin{equation}
G_t = \frac{1}{B} \sum_{j=1}^{B} \nabla_\Theta \mathcal{L}(x_j, \Theta_t)
\end{equation}
where $B$ is the global batch size and $\{x_j\}_{j=1}^{B}$ are the training samples.
\end{theorem}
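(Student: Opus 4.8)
The plan is to prove the statement by induction on the step index $t$, reducing the requirement of an identical parameter sequence to an identical gradient at every step via injectivity of the optimizer's update map. First I would pin down the single-device reference trajectory: on the full batch, single-device training minimizes the empirical loss $\frac{1}{B}\sum_{j=1}^{B}\mathcal{L}(x_j,\Theta)$, so by linearity of $\nabla_\Theta$ its step-$t$ gradient is exactly the right-hand side of the statement, which I denote $G_t^\star$. Writing the optimizer's deterministic update as $\Theta_{t+1}=U(\Theta_t,G_t;\Omega_t)$, and observing that the optimizer state $\Omega_t$ is itself a fixed function of the gradient history $G_0,\dots,G_{t-1}$ together with the shared initialization, the entire reference trajectory is determined once the gradients are.

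For the induction, the base case follows from the definition of semantic equivalence, which fixes a common initialization $\Theta_0$ (and hence $\Omega_0$). Assuming the distributed gradients agree with $G_0^\star,\dots,G_{t-1}^\star$ through step $t-1$, the induction forces $\Theta_t$ and $\Omega_t$ to coincide with their single-device counterparts, since each is a deterministic function of the already-matched gradients. I then invoke semantic equivalence one step further: $\Theta_{t+1}$ must match, and because the inputs $\Theta_t$ and $\Omega_t$ already match, the distributed gradient $\tilde G_t$ and the reference $G_t^\star$ must produce the same output under $U(\Theta_t,\cdot\,;\Omega_t)$. If that map is injective in its gradient argument, then $\tilde G_t=G_t^\star$ up to the floating-point tolerance permitted by the definition, closing the induction.

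The main obstacle is establishing that injectivity across the optimizers of interest. For plain SGD it is immediate, since $U(\Theta_t,G;\Omega_t)=\Theta_t-\eta G$ is affine with nonzero slope, so equal outputs force equal gradients (indeed $G=(\Theta_t-\Theta_{t+1})/\eta$). For adaptive methods such as Adam the per-coordinate update $g\mapsto \hat m/(\sqrt{\hat v}+\epsilon)$ is nonlinear and in fact saturates as $|g|\to\infty$, so it need not be globally injective; this is the delicate point I expect to dominate the rigorous argument. The route I would take is to isolate a mild non-degeneracy condition—strict monotonicity of this per-coordinate map on the operating range of gradient magnitudes—under which injectivity holds, and then verify that standard optimizers satisfy it in the regime relevant to training. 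Formulating the weakest such condition, and confirming that the floating-point slack in the definition of semantic equivalence cannot collapse two genuinely distinct gradients through the saturating tail where the inverse map is ill-conditioned, is where the real care lies.
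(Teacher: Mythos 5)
Your proposal proves the same statement as the paper but by a substantially more careful route, and in doing so exposes a gap the paper glosses over. The paper's proof is a two-line contrapositive: single-device training computes exactly the stated $G_t$, and if the distributed configuration produces $G'_t \neq G_t$ then $\Theta'_{t+1} \neq \Theta_{t+1}$ --- an implication that silently assumes the update map $G \mapsto \mathrm{optimizer}(G, \Omega_t)$ is injective in its gradient argument. You make that hidden hypothesis explicit, verify it for SGD (affine with nonzero slope, so $G = (\Theta_t - \Theta_{t+1})/\eta$), and correctly observe that Adam's per-coordinate map $g \mapsto \hat m/(\sqrt{\hat v}+\epsilon)$ saturates as $|g| \to \infty$ and need not be globally injective, so the ``must equal'' claim is not literally provable for arbitrary optimizers without a non-degeneracy condition. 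Your induction scaffolding is also genuinely needed and absent from the paper: since semantic equivalence as defined constrains only the parameter sequence, you must first force $\Theta_t$ and $\Omega_t$ to coincide with the reference (via $\Omega_t$ being a deterministic function of the gradient history and the shared initialization) before per-step injectivity can be invoked; one cannot shortcut this by arguing a wrong gradient must eventually derail the trajectory, because absent injectivity a mismatched $\Omega_t$ could in principle be compensated later. What each approach buys: the paper's argument is short and correct for plain SGD, which is the optimizer its proof actually invokes; yours identifies the exact hypothesis (strict per-coordinate monotonicity on the operating range) under which the theorem extends to adaptive optimizers, and flags the further delicacy --- the floating-point tolerance in the definition of semantic equivalence interacting with the ill-conditioned inverse in Adam's saturating tail --- that the paper does not address at all.
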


\begin{proof}
Single-device SGD computes exactly equation (16) and updates $\Theta_{t+1} = \Theta_t - \eta \cdot \text{optimizer}(G_t, \Omega_t)$. If the distributed configuration produces a different gradient $G'_t \neq G_t$, then $\Theta'_{t+1} \neq \Theta_{t+1}$, violating semantic equivalence.

Conversely, if $G_t$ satisfies equation (16), the update matches single-device training.
\end{proof}

\textbf{Gradient integrity violations:}
\begin{itemize}
\item Missing samples: device fails to contribute its gradients
\item Duplicate samples: same sample processed by multiple devices
\item Incorrect normalization: dividing by local batch size instead of global
\end{itemize}

\begin{theorem}[State Consistency]
\label{thm:consistency}
For semantic equivalence, whenever a state tensor is accessed or communicated, all participating devices must hold values that are bitwise identical (up to floating-point associativity) and use identical data types.
\end{theorem}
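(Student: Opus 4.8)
The plan is to establish necessity by contraposition, reusing the structure of the Gradient Integrity argument (Theorem~\ref{thm:gradient}) as the bridge to non-equivalence. First I would make precise the slack in the phrase ``up to floating-point associativity'': call two copies of a logical tensor \emph{consistent} when they differ only by a permissible reordering of a summation, and \emph{inconsistent} otherwise. A difference in data type is never a mere reordering artifact (it changes bit width and rounding), so type mismatch counts as inconsistency by definition. With this quotient fixed, the goal becomes: any inconsistency in a state tensor, at the instant it is accessed or communicated, forces the distributed parameter sequence to deviate from the single-device sequence.

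I would then pick the earliest step $t$ at which consistency fails, for some state $X$ held by participating devices $a$ and $b$, and split on the role $X$ plays in the forward--backward--update cycle. \textbf{Parameters read in forward/backward.} If $X=\Theta$, devices $a$ and $b$ evaluate $\nabla_\Theta\mathcal{L}$ at genuinely different points, so the reduced gradient assembled across devices is $\tfrac{1}{B}\sum_j \nabla_\Theta\mathcal{L}(x_j,\Theta^{(d(j))})$ with the local copies $\Theta^{(d(j))}$ not uniformly equal to the canonical $\Theta_t$; this is not the quantity required by Theorem~\ref{thm:gradient}, and since the update map is injective in the gradient, $\Theta_{t+1}$ diverges. \textbf{Gradients under a collective.} If $X=G$, the All-Reduce or Reduce-Scatter sums inconsistent summands, so its output is not $\tfrac{1}{B}\sum_j\nabla_\Theta\mathcal{L}(x_j,\Theta_t)$, again contradicting Theorem~\ref{thm:gradient}. \textbf{Optimizer state read during the update.} If $X=\Omega$, even a correct $G_t$ is passed through two different optimizer maps, so $a$ and $b$ emit different $\Theta_{t+1}$; parameter consistency then fails at $t+1$, and the parameter case applied one step later corrupts $G_{t+1}$. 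For a pure \textbf{type mismatch} I would argue that a collective reducing operands of differing precision returns a value whose accumulation type and rounding cannot be expressed as a reordering of any single fixed-precision sum, hence cannot equal the single-device value; and that a parameter stored at different precisions already denotes numerically different values before any computation, reducing to the parameter case.

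To match the ``necessary and sufficient'' claim of the abstract, I would then record the converse: if every co-held tensor is consistent and identically typed at each access and communication, then by induction on $t$ every device operates on the canonical logical tensors, each collective returns the single-device value up to reduction order, and together with gradient integrity the emitted sequence $\Theta_0,\Theta_1,\ldots$ coincides with single-device training.

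The main obstacle is the genericity gap: an inconsistency could in principle be masked if $\nabla_\Theta\mathcal{L}$ happens to agree at the two parameter values on the relevant samples, or if two different optimizer states happen to map $G_t$ to the same update. I would handle this exactly as Theorem~\ref{thm:gradient} does, by requiring the defining equality to hold identically rather than coincidentally --- equivalently, across admissible data and initializations --- so that a value difference not absorbed by the floating-point-associativity quotient propagates. Pinning down this ``generically'' cleanly, while keeping it separated from the benign reordering slack, is the only delicate step; the case analysis itself is routine once the consistency quotient is fixed.
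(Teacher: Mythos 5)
Your proposal is correct, and its engine is the same as the paper's: necessity by contraposition, routing every inconsistency through a violation of gradient integrity (Theorem~\ref{thm:gradient}). The paper's actual proof, however, is far more minimal than yours --- it treats only two cases explicitly (divergent parameter copies ``compute different gradients for the same input''; an FP32-vs-FP16 reduction rounds differently) and stops, leaving gradients and optimizer state implicit and deferring the converse entirely to the induction in Theorem~\ref{thm:main}. Relative to that, you do three things differently. First, your earliest-failing-step case split over $\Theta$, $G$, $\Omega$, and pure type mismatch is a genuine completion: the $\Omega$ case, where a correct $G_t$ passes through two different optimizer maps so that parameter consistency breaks at $t+1$ and the parameter case then fires one step later, appears nowhere in the paper. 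Second, your explicit quotient making precise ``bitwise identical up to floating-point associativity'' (consistent means differing only by a permissible reordering of a summation; a type mismatch can never be such a reordering) is a sharper reading of the statement than the paper attempts, and it cleanly absorbs the paper's type-mismatch case. Third, and most valuably, you name the genericity gap: the paper's sentence about computing ``different gradients for the same input'' silently assumes $\nabla_\Theta \mathcal{L}$ separates distinct parameter values, which is false pointwise, and your repair --- requiring the defining equality to hold identically over admissible data and initializations rather than coincidentally --- is the right one. Note, though, that you credit Theorem~\ref{thm:gradient} with already making this move; in fact its proof has the same pointwise gap (it asserts $G'_t \neq G_t$ implies $\Theta'_{t+1} \neq \Theta_{t+1}$, i.e., injectivity of the update in the gradient, without argument), so your fix strengthens both proofs rather than imitating one. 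One scope remark: the sufficiency direction you sketch belongs to Theorem~\ref{thm:main} in the paper, where it is proved by induction under the determinism, initialization, and synchrony assumptions; Theorem~\ref{thm:consistency} itself claims only necessity, so that portion of your plan is harmless surplus here.
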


\begin{proof}
Suppose devices hold inconsistent values. Consider parameters: if device 0 has $\Theta^{(0)}$ and device 1 has $\Theta^{(1)} \neq \Theta^{(0)}$, they compute different gradients for the same input, violating gradient integrity.

For data types: if device 0 reduces in FP32 and device 1 reduces in FP16, rounding differs, producing inconsistent results.
\end{proof}

\textbf{State consistency violations:}
\begin{itemize}
\item Stale parameters: device uses outdated copy after an update
\item Type mismatch: different devices use different precisions
\item Reduction order dependence: non-deterministic reduction without proper handling
\end{itemize}

\begin{theorem}[Necessity and Sufficiency]
\label{thm:main}
Under the following assumptions, gradient integrity and state consistency are jointly necessary and sufficient for semantic equivalence:

\begin{enumerate}
\item \textbf{Deterministic operations:} All arithmetic operations produce identical results given identical inputs.
\item \textbf{Consistent initialization:} All devices begin with identical $\Theta_0$ and $\Omega_0$.
\item \textbf{Synchronous execution:} All devices complete each step before any begins the next.
\end{enumerate}
\end{theorem}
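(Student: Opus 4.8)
The plan is to prove the biconditional by splitting it into the necessity and sufficiency directions, treating the two named conditions (gradient integrity, Theorem~\ref{thm:gradient}; state consistency, Theorem~\ref{thm:consistency}) as the conjuncts. Necessity is essentially already in hand: each of the two prior theorems was proved in contrapositive form, exhibiting that a violation of the respective condition forces some $\Theta'_{t+1} \neq \Theta_{t+1}$ and hence breaks semantic equivalence. So assuming semantic equivalence, I would simply invoke those two theorems to conclude that both conditions must hold; this direction consumes none of the three assumptions beyond what was already implicit in the earlier proofs.

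Sufficiency is the substantive direction, and I would prove it by induction on the step index $t$, showing that the distributed parameter sequence coincides with the single-device sequence $\{\Theta_t\}$, up to the reduction-order floating-point slack permitted by the definition of semantic equivalence. For the base case, Assumption~2 (consistent initialization) gives that every device starts at the common $\Theta_0$ and $\Omega_0$, matching single-device training at $t=0$. For the inductive step, I would assume the distributed state at step $t$ agrees with single-device training and, by the state-consistency hypothesis, is bitwise identical across devices. Gradient integrity then forces the aggregated gradient $G_t$ to equal $\frac{1}{B}\sum_{j} \nabla_\Theta \mathcal{L}(x_j, \Theta_t)$, exactly the single-device gradient at the same parameters, while state consistency simultaneously guarantees that $\Omega_t$ and the data type used in the reduction agree across devices. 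Applying the update $\Theta_{t+1} = \Theta_t - \eta\cdot\text{optimizer}(G_t, \Omega_t)$ and invoking Assumption~1 (determinism), identical inputs produce an identical $\Theta_{t+1}$ on every device, matching the single-device update and closing the induction.

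The role of Assumption~3 (synchronous execution) is what I expect to require the most care, and it is the main obstacle to a fully rigorous argument. The induction above silently assumes that the state a device reads at step $t$ is the state produced by the step-$t$ aggregation and not a stale or partially updated copy from a concurrent step; without a synchronization barrier, a device could begin its forward pass using parameters that some peers have already advanced, so the inputs to the deterministic update would no longer be well-defined and the per-step correspondence with single-device training would collapse even though both conditions hold instantaneously. I would therefore state explicitly that synchronous execution makes the read-compute-update phases globally well-ordered, rendering ``the state at step $t$'' unambiguous. The remaining subtlety is the floating-point caveat built into the definition of semantic equivalence: gradient integrity is written as an exact identity, but distributed reductions sum shards in a different associative order than a single device does. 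I would note that this discrepancy is precisely the reduction-order slack the definition permits, so the induction should be read as equality up to that slack rather than bitwise equality; under determinism this slack is itself reproducible, leaving gradient integrity and state consistency jointly sufficient in the stated sense.
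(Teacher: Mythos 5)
Your proposal is correct and follows essentially the same route as the paper: necessity by invoking Theorems~\ref{thm:gradient} and~\ref{thm:consistency}, and sufficiency by induction on the step index, with consistent initialization (Assumption~2) for the base case and determinism (Assumption~1) plus gradient integrity and state consistency closing the inductive step, with synchrony (Assumption~3) ensuring a well-defined per-step state. Your explicit treatment of why Assumption~3 is needed and of the reduction-order floating-point slack merely spells out what the paper's proof leaves implicit (its single remark that ``all devices complete this before proceeding''), so there is no substantive difference in approach.
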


\begin{proof}
\textbf{Necessity:} Shown in Theorems \ref{thm:gradient} and \ref{thm:consistency}. Violating either condition causes the parameter trajectory to diverge from single-device training.

\textbf{Sufficiency:} We prove by induction on training steps.

\textit{Base case:} At $t = 0$, all devices have identical $\Theta_0$ and $\Omega_0$ by assumption (2). State consistency holds.

\textit{Inductive step:} Assume at step $t$, all devices have consistent $\Theta_t$ and $\Omega_t$. We show step $t + 1$ produces consistent $\Theta_{t+1}$.

\begin{enumerate}
\item Each device $i$ computes local gradients $G_t^{(i)}$ on its data shard of size $b = B/N$. The local gradient is the mean over local samples: $G_t^{(i)} = \frac{1}{b}\sum_{j \in \text{shard}_i} \nabla_\Theta \mathcal{L}(x_j, \Theta_t)$. By assumption (1), identical parameters and inputs yield identical gradients.

\item Gradient synchronization via All-Reduce computes $\sum_i G_t^{(i)}$. Since each local gradient is already the mean over $b$ local samples, and there are $N$ devices each contributing such a mean, dividing the All-Reduce sum by $N$ yields the correct global mean gradient $G_t = \frac{1}{B}\sum_{j=1}^{B}\nabla_\Theta \mathcal{L}(x_j, \Theta_t)$. By assumption (3), all devices complete this before proceeding.

\item By gradient integrity, the synchronized result equals equation (16).

\item By state consistency, all devices see the same $G_t$, $\Theta_t$, $\Omega_t$.

\item The optimizer update $\Theta_{t+1} = f(\Theta_t, G_t, \Omega_t)$ produces identical results on all devices by assumption (1).
\end{enumerate}

Therefore, $\Theta_{t+1}$ is consistent and matches single-device training.
\end{proof}

\begin{remark}
Assumption (1) may be violated by non-deterministic GPU operations (e.g., atomics in reductions). Frameworks provide deterministic modes that satisfy this assumption at some performance cost.
\end{remark}

\section{Composition Calculus}

Large-scale training combines multiple strategies. We formalize valid compositions. For example, tensor parallelism within a node can be combined with data parallelism across nodes: TP handles intra-layer distribution while DP handles gradient averaging.

\begin{definition}[Composition]
Let $\Pi_1$ and $\Pi_2$ be placement specifications over device groups $D_1$ and $D_2$. The composition $\Pi_1 \otimes \Pi_2$ applies $\Pi_1$ within each subset of $D_1$ and $\Pi_2$ across subsets.
\end{definition}

\begin{figure}[t]
\centering
\begin{tikzpicture}[
    device/.style={rectangle, draw, minimum size=0.8cm, font=\small},
    group/.style={draw, dashed, rounded corners, inner sep=0.2cm}
]
\node[device] (d0) at (0,0) {0};
\node[device] (d1) at (1,0) {1};
\node[device] (d2) at (2,0) {2};
\node[device] (d3) at (3,0) {3};
\node[device] (d4) at (0,-1.5) {4};
\node[device] (d5) at (1,-1.5) {5};
\node[device] (d6) at (2,-1.5) {6};
\node[device] (d7) at (3,-1.5) {7};

\node[group, fit=(d0)(d1)(d2)(d3), label=above:TP Group 0] {};
\node[group, fit=(d4)(d5)(d6)(d7), label=below:TP Group 1] {};

\draw[<->, thick] (d0) -- (d4);
\draw[<->, thick] (d1) -- (d5);
\draw[<->, thick] (d2) -- (d6);
\draw[<->, thick] (d3) -- (d7);

\node[right=0.5cm of d3] {\small DP sync};
\node[below right=0.3cm of d7] {\small TP(4) $\times$ DP(2) = 8 devices};
\end{tikzpicture}
\caption{Composition of tensor parallelism (within rows) and data parallelism (across rows). TP communication happens within dashed boxes; DP communication happens along vertical arrows.}
\label{fig:composition}
\end{figure}
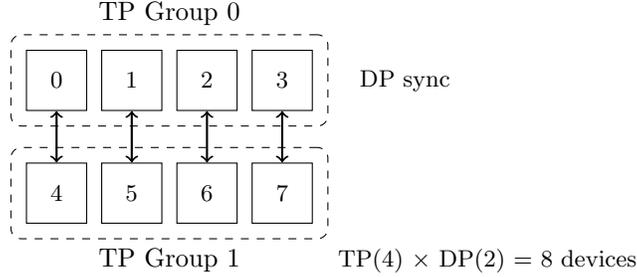

\begin{theorem}[Tensor-Data Composition]
\label{thm:tp_dp}
Tensor parallelism (degree $T$) composes with data parallelism (degree $D$) on $N = T \times D$ devices when:
\begin{enumerate}
\item TP groups consist of devices $\{iT, iT + 1, \ldots, (i+1)T - 1\}$ for $i \in \{0, \ldots, D-1\}$
\item TP communication (per-layer) completes before DP gradient sync
\item DP gradient sync aggregates across TP groups, not within
\end{enumerate}
\end{theorem}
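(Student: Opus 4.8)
The plan is to establish semantic equivalence by reducing to the necessity-and-sufficiency result of Theorem \ref{thm:main}: it suffices to verify that, under the three stated conditions, the composition $\mathrm{TP}(T) \otimes \mathrm{DP}(D)$ satisfies gradient integrity (Theorem \ref{thm:gradient}) and state consistency (Theorem \ref{thm:consistency}) while inheriting the three standing assumptions (determinism, consistent initialization, synchronous execution). The organizing observation is that the $N = T \times D$ devices decompose into two orthogonal communication domains: \emph{intra-group} TP collectives, which operate within each block $\{iT, \dots, (i+1)T - 1\}$, and \emph{transversal} DP collectives, which operate across the index-$k$ members $\{iT + k : i \in \{0, \dots, D-1\}\}$ for each fixed $k$. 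Condition 1 fixes this product structure, and the proof proceeds by verifying the two correctness properties at each level and then showing the two levels do not interfere.

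For gradient integrity I would argue in two stages. First, \emph{within} a TP group: by Table \ref{tab:placements} tensor parallelism places $\pi_A = \pi_\Theta = S$, so each layer's output and its gradient are assembled from shards via the group-internal collectives. Because condition 2 forces these collectives to complete before any DP synchronization begins, each TP group of size $T$ behaves as a single logical replica that computes the exact gradient of its assigned data shard of size $B/D$, namely $G_t^{(i)} = \tfrac{1}{B/D}\sum_{j \in \mathrm{shard}_i} \nabla_\Theta \mathcal{L}(x_j, \Theta_t)$ --- this is precisely the gradient-integrity statement of Theorem \ref{thm:gradient} applied at layer granularity to the sharded intra-layer computation. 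Second, \emph{across} groups: by condition 3 the DP collective aggregates the $D$ per-group gradients over the transversal communicators. Since condition 1 makes the $D$ data shards a disjoint partition of the global batch, summing the $D$ group means and normalizing by $D$ reproduces $G_t = \tfrac{1}{B}\sum_{j=1}^{B} \nabla_\Theta \mathcal{L}(x_j, \Theta_t)$, matching equation~(16). No sample is dropped or double counted because the shards partition the batch and the two domains are disjoint.

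For state consistency the key is that corresponding shards stay synchronized across groups. By consistent initialization (assumption 2 of Theorem \ref{thm:main}) every device $iT + k$ begins holding the identical slice $\mathrm{shard}_k(\Theta_0)$, because all groups use the same deterministic TP partition; condition 3 then synchronizes exactly these corresponding shards, and determinism (assumption 1) guarantees the optimizer applies an identical update to each copy of shard $k$. An induction on $t$ identical in form to the sufficiency argument of Theorem \ref{thm:main} propagates this: if the transversal copies of every shard agree at step $t$, the aggregated gradient and identical update rule force agreement at step $t+1$. Disjointness of the two communication domains, together with the ordering imposed by condition 2, ensures that intra-group TP collectives and transversal DP collectives never read or write inconsistent intermediate values, so Theorem \ref{thm:consistency} is satisfied.

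The main obstacle I anticipate is the first stage above --- rigorously justifying that the sharded intra-layer TP computation is itself semantically equivalent to the unsharded layer, i.e.\ that the group-internal collectives reconstruct the exact single-device layer output and its gradient. This amounts to a self-contained sub-claim that TP alone satisfies gradient integrity and state consistency at layer granularity, and it is where the orthogonality of the two domains must be used most carefully: one must verify that the transversal DP aggregation over \emph{physically sharded} parameters is well defined, which requires the TP sharding map to be \emph{identical} across all $D$ groups so that devices $iT + k$ and $jT + k$ hold gradients for the same logical parameters. Once this alignment of sharding maps is established from consistent initialization, the two-level decomposition closes and Theorem \ref{thm:main} delivers semantic equivalence.
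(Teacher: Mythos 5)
Your proposal is correct and takes essentially the same route as the paper's proof: both verify gradient integrity and state consistency (the hypotheses of Theorem \ref{thm:main}) by separating intra-group TP collectives from cross-group DP synchronization, with DP averaging the per-group gradients over disjoint data shards. Your write-up is in fact more explicit than the paper's --- the shard-aligned transversal communicators, the induction on $t$, and the sub-claim that TP's internal collectives exactly reconstruct the unsharded layer computation are all points the paper simply asserts (``TP communication (All-Reduce on activations) ensures correct forward/backward computation''), so the obstacle you flag is one the paper's own proof leaves unaddressed as well.
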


\begin{proof}
We verify gradient integrity and state consistency.

\textbf{Gradient integrity:} Within each TP group, devices hold different parameter shards but process the same data, computing partial gradients. TP communication (All-Reduce on activations) ensures correct forward/backward computation. Across TP groups, different data shards are processed. DP gradient sync (All-Reduce across groups) averages gradients, satisfying equation (16).

\textbf{State consistency:} TP groups maintain consistent sharded parameters through synchronized updates. DP groups maintain consistent replicated state through gradient sync. The separation (TP within, DP across) ensures no conflicts.
\end{proof}

\begin{theorem}[Pipeline-Data Composition]
\label{thm:pp_dp}
Pipeline parallelism ($K$ stages) composes with data parallelism ($D$ replicas) on $N = K \times D$ devices when:
\begin{enumerate}
\item Each stage $k$ is replicated $D$ times across devices $\{kD, kD + 1, \ldots, (k+1)D - 1\}$
\item Gradient sync is per-stage: All-Reduce only among replicas of the same stage
\item Activation transfer is between corresponding stages in the same pipeline
\end{enumerate}
\end{theorem}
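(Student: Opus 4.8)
\subsection*{Proof Proposal}

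The plan is to reduce the claim to the two correctness conditions of Theorem~\ref{thm:main}, exactly as in the proof of Theorem~\ref{thm:tp_dp}: I will show that conditions (1)--(3) imply gradient integrity (Theorem~\ref{thm:gradient}) and state consistency (Theorem~\ref{thm:consistency}), and then invoke the necessity-and-sufficiency result to conclude semantic equivalence. The first step is to fix indexing that makes the two axes orthogonal. By condition~(1), device $kD + d$ holds the copy of stage $k$ sitting in pipeline $d$, for $k \in \{0,\ldots,K-1\}$ and $d \in \{0,\ldots,D-1\}$. The stage axis partitions the model's layers into $K$ disjoint, exhaustive blocks, so every parameter belongs to exactly one stage; the pipeline axis replicates each block $D$ times and assigns data shard $d$ to pipeline $d$. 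Conditions~(2) and~(3) then act on disjoint axes: activation transfer moves data along the stage axis within a fixed pipeline, while gradient sync reduces along the pipeline axis within a fixed stage.

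For gradient integrity I would argue in two layers. First, within a single pipeline $d$, I will show that the sequential forward pass together with condition~(3) reconstructs the single-device forward computation on shard $d$: each stage receives as input exactly the activations produced by the preceding stage, with none dropped or duplicated, so the composed map equals the full network evaluated on shard $d$. The backward pass transfers activation gradients in the reverse direction, so by the chain rule each stage computes the exact gradient of $\mathcal{L}$ with respect to its own parameter block over shard $d$; if micro-batching is used the per-micro-batch gradients accumulate linearly and preserve this identity. Second, across pipelines, stage $k$ now holds $D$ partial gradients, one per shard; the per-stage All-Reduce of condition~(2) sums them, and after the $1/D$ normalization this yields the global mean gradient for the parameter block owned by stage $k$. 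Because the stage partition is disjoint and exhaustive, taking the union over $k$ reproduces the full $G_t$ of equation~(16), establishing gradient integrity.

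For state consistency I would restrict attention to each stage group separately, since distinct stages share no parameters and therefore impose no cross-stage agreement. Within the $D$ replicas of stage $k$, assumption~(2) of Theorem~\ref{thm:main} gives identical initial parameters and optimizer state; condition~(2) delivers an identical synchronized gradient to every replica, and assumption~(1) makes the optimizer update identical, so consistency is preserved inductively. Assumption~(3) guarantees that the pipeline flush completes---so that all micro-batch gradients have been accumulated---before the update begins. With both conditions verified, Theorem~\ref{thm:main} yields semantic equivalence.

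The main obstacle I expect is the pipelining-specific bookkeeping in the gradient-integrity step: I must verify that the activation boundary in condition~(3) neither drops nor duplicates any sample or micro-batch as it crosses a stage boundary (the pipeline analog of the missing- and duplicate-sample violations listed after Theorem~\ref{thm:gradient}), and that the two reductions are genuinely orthogonal---that condition~(2) sums strictly along the pipeline axis and never mixes gradients belonging to different stages, which would double-count parameters and break equation~(16).
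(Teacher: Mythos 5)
Your proposal is correct and follows essentially the same route as the paper's proof: both verify gradient integrity (per-stage All-Reduce averages the $D$ shard gradients for each disjoint parameter block $\Theta^{(k)}$) and state consistency (stages within a pipeline hold disjoint parameters; replicas within a DP group stay identical via synchronized updates), then conclude via Theorem~\ref{thm:main}. Your version is in fact more careful than the paper's terse sketch---notably the chain-rule reconstruction of the single-device forward/backward within one pipeline and the micro-batch flush bookkeeping, which the paper asserts implicitly---but these are elaborations of the same argument, not a different approach.
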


\begin{proof}
\textbf{Gradient integrity:} Each stage $k$ has parameters $\Theta^{(k)}$. Replicas of stage $k$ process different data shards and compute local gradients. Per-stage All-Reduce averages these, satisfying gradient integrity for $\Theta^{(k)}$.

\textbf{State consistency:} Within a pipeline, stages hold disjoint parameters (no overlap). Within a DP group, replicas hold identical parameters (synchronized by All-Reduce). No device needs to reconcile conflicting placements.
\end{proof}

\subsection{Invalid Compositions}

\begin{proposition}[TP Across Slow Interconnect]
Tensor parallelism across devices with interconnect latency $\alpha$ incurs per-step latency overhead $O(L \cdot \alpha)$ where $L$ is the number of layers.
\end{proposition}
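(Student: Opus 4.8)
The plan is to invoke the standard $\alpha$--$\beta$ cost model for collective communication, in which any collective operation costs $\alpha + \beta\cdot|T|$: a fixed startup term $\alpha$ (the interconnect latency named in the proposition) plus a bandwidth term proportional to the message size $|T|$. The communication-primitive accounting given earlier tracks only the bandwidth term $2\cdot\frac{N-1}{N}\cdot|T|$; here I would attach the latency term $\alpha$ to each collective and isolate the sum of these $\alpha$ terms as the ``latency overhead'' to be bounded. First I would read off the communication structure of tensor parallelism from its placement specification: in Table~\ref{tab:placements}, TP has $\pi_A = S$, so in Megatron-style intra-layer parallelism each layer's output is a distributed partial sum that must be reassembled by an All-Reduce before it can be consumed. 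I would fix a small constant $c$ equal to the number of such collectives per layer across forward and backward passes, so that one training step issues exactly $cL$ collectives.

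The upper bound is then immediate: each of the $cL$ collectives contributes at most one startup term $\alpha$ to the step's elapsed time, so the total latency overhead is at most $cL\alpha = O(L\alpha)$, establishing the stated claim. The non-trivial content, and the step I expect to be the main obstacle, is showing that this bound is tight rather than vacuous---that no scheduling can amortize the $L$ factor away. For this I would argue that the collectives lie on a serial dependency chain: the reassembled activation produced by the layer-$\ell$ All-Reduce is exactly the input to layer $\ell+1$, so by state consistency (Theorem~\ref{thm:consistency}) no device can correctly begin layer $\ell+1$ until the layer-$\ell$ collective has completed. Overlap is therefore confined within a single layer's collective and cannot cross the layer boundary, forcing all $cL$ startup terms onto the critical path and yielding a matching $\Omega(L\alpha)$ lower bound.

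Finally I would sharpen the contrast with data parallelism to make the result meaningful. Under DP (placement $\pi_G = R$), the single gradient All-Reduce can be deferred to the end of the step or overlapped with backward computation, so its lone startup term contributes only $O(\alpha)$ latency independent of $L$; the $L$-fold blow-up is specific to the per-layer, serially-dependent collectives that sharded activations ($\pi_A = S$) impose. This is precisely why the composition Theorem~\ref{thm:tp_dp} restricts TP to fast intra-node interconnects: the bandwidth terms aside, the unavoidable $\Theta(L\alpha)$ latency means TP across a high-$\alpha$ link degrades the step time in proportion to model depth, whereas DP's synchronization cost does not scale with $L$.
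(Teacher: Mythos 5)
Your proposal follows essentially the same route as the paper's proof: tensor parallelism forces at least one synchronous collective per layer, and these $L$ startup terms land on the critical path, giving $O(L\cdot\alpha)$ latency overhead. Your additions---making the $\alpha$--$\beta$ cost model explicit, arguing tightness via the serial layer-to-layer dependency, and contrasting with DP's single deferrable All-Reduce---are sound elaborations of what the paper asserts more tersely (though citing Theorem~\ref{thm:consistency} for the scheduling dependency is a slight mislabel; the plain data dependency of layer $\ell+1$ on layer $\ell$'s reassembled activation suffices).
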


\begin{proof}
Each layer requires at least one synchronous collective (All-Reduce or All-Gather) for TP. With $L$ layers, this adds $L$ latency terms to the critical path. If $\alpha$ is large (e.g., cross-node Ethernet vs.\ intra-node NVLink), this dominates compute time.
\end{proof}

This explains why TP is restricted to intra-node communication in practice.

\begin{remark}[Three-Way Composition]
Production systems commonly use TP $\otimes$ PP $\otimes$ DP (3D parallelism). This composes validly when: (1) TP is innermost (intra-node), (2) PP is middle (inter-node within a pipeline), and (3) DP is outermost (across pipeline replicas). The correctness follows by applying Theorems \ref{thm:tp_dp} and \ref{thm:pp_dp} hierarchically.
\end{remark}

\section{Application: Strategy Selection}

We demonstrate how the framework guides practical decisions.

\begin{algorithm}[t]
\caption{Illustrative Strategy Selection via Placement Semantics}
\label{alg:select}
\begin{algorithmic}[1]
\REQUIRE Model size $P$, device memory $M_d$, device count $N$, interconnect type
\ENSURE Placement specification $\Pi$
\STATE $M_{\text{model}} \leftarrow 16P$ \COMMENT{params + optimizer + gradients in bytes}
\IF{$M_{\text{model}} < 0.7 \cdot M_d$}
    \RETURN $(R, R, R, R)$ \COMMENT{Data Parallelism}
\ENDIF
\IF{$M_{\text{model}}/N < 0.7 \cdot M_d$}
    \RETURN $(S^*, S, S, R)$ \COMMENT{ZeRO-3 / FSDP}
\ENDIF
\IF{single layer $> 0.3 \cdot M_d$ \AND fast interconnect}
    \STATE Add tensor parallelism within node
\ENDIF
\RETURN composed specification
\end{algorithmic}
\end{algorithm}

\textbf{Note:} The thresholds (0.7, 0.3) in Algorithm \ref{alg:select} are illustrative heuristics leaving headroom for activations and runtime allocations. Practitioners should adjust based on measured activation sizes and framework overhead.

\subsection{Validation Against Published Results}

We validate our derivation rules against published numbers from the ZeRO paper \cite{rajbhandari2020zero}.

\textbf{Memory validation.} The ZeRO paper uses mixed-precision accounting: 2 bytes (FP16 params) + 2 bytes (FP16 gradients) + 4 bytes (FP32 master weights) + 8 bytes (FP32 optimizer state) = 16 bytes per parameter. Our framework uses this same accounting (see Table \ref{tab:memory} and Remark 1).

For data parallelism, ZeRO reports $16P$ bytes per device. Our framework: $\mu(R, 2P) + \mu(R, 2P) + \mu(R, 12P) = 16P$ bytes, \textbf{matching exactly}.

For ZeRO Stage 3 with $N$ devices, the ZeRO paper reports $16P/N$ bytes per device. Our framework predicts: $16P/N$ bytes (plus transient memory for gathered parameters), \textbf{again matching}.

\textbf{Communication validation.} The ZeRO paper reports that ZeRO Stage 3 requires 1.5$\times$ the communication volume of data parallelism. Our framework computes:
\begin{itemize}
\item Data parallelism: $C_{\text{DP}} = 2 \cdot \frac{N-1}{N} \cdot |G| \approx 2|G|$ for large $N$
\item ZeRO Stage 3: $C_{\text{Z3}} = \frac{N-1}{N} \cdot |G| + 2 \cdot \frac{N-1}{N} \cdot |\Theta| \approx |G| + 2|\Theta| = 3|G|$ (since $|\Theta| = |G|$ in FP16)
\end{itemize}

The ratio $3|G|/2|G| = 1.5$, \textbf{matching the published 1.5$\times$ overhead}.

We note this validation compares analytical predictions with published analytical results, not runtime measurements. The match demonstrates our framework captures the same cost model used by ZeRO authors. Empirical validation with profiling tools would strengthen these results but is outside our theoretical scope.

\textbf{Verification protocol:} Given a configuration, verify correctness by:
\begin{enumerate}
\item \textbf{Gradient integrity check:} Run identical batch on 1 device and $N$ devices. Compare gradient norm: $\|G_1 - G_N\|/\|G_1\| < 10^{-5}$.
\item \textbf{State consistency check:} After any collective, verify all devices have identical checksums.
\item \textbf{Trajectory check:} Train for 100 steps on 1 device and $N$ devices with same seed. Final loss difference should be $< 10^{-4}$.
\end{enumerate}

\section{Related Work}

\textbf{Parallelism systems.} Data parallelism was systematized by Li et al.\ \cite{li2020pytorch} and scaled by Goyal et al.\ \cite{goyal2017accurate}. ZeRO \cite{rajbhandari2020zero} introduced state sharding, implemented in DeepSpeed \cite{rasley2020deepspeed}. FSDP \cite{zhao2023pytorch} provides PyTorch-native sharding. Megatron-LM \cite{shoeybi2019megatron} established tensor parallelism patterns; Korthikanti et al.\ \cite{korthikanti2023reducing} extended them to sequence parallelism. GPipe \cite{huang2019gpipe} introduced synchronous pipelines; PipeDream \cite{narayanan2019pipedream} explored asynchronous variants; Narayanan et al.\ \cite{narayanan2021efficient} developed efficient schedules. GShard \cite{lepikhin2021gshard} and Switch Transformer \cite{fedus2022switch} established expert parallelism.

\textbf{Automatic parallelism.} Alpa \cite{zheng2022alpa} formulates parallelism selection as an optimization problem with cost models for memory and communication; it focuses on search algorithms rather than semantic foundations. Galvatron \cite{miao2022galvatron} similarly optimizes parallelism configurations using profiling-based cost models. Unity \cite{unger2022unity} jointly optimizes algebraic transformations and parallelization using formally verified graph substitutions; it uses theorem provers to verify correctness of individual transformations, while our work proves correctness conditions for the overall training procedure. Our work differs by providing a declarative framework where strategies are \textit{specified} by placement rather than discovered by search; the two approaches are complementary.

\textbf{Memory optimization.} Mixed precision training \cite{micikevicius2018mixed} reduces memory via lower precision. Activation checkpointing \cite{chen2016training} trades compute for memory. FlashAttention \cite{dao2022flashattention, dao2024flashattention2} optimizes attention memory via recomputation.

\textbf{Scaling studies.} Scaling laws \cite{kaplan2020scaling, hoffmann2022training} guide capacity allocation. Training reports for GPT-3 \cite{brown2020language}, PaLM \cite{chowdhery2023palm}, and LLaMA \cite{touvron2023llama} describe practical configurations.

\begin{table}[t]
\centering
\caption{Comparison of our contribution versus prior work.}
\label{tab:comparison}
\begin{tabular}{lcc}
\toprule
Capability & Prior Work & This Paper \\
\midrule
Describes specific system & \checkmark & \\
Systematic placement definitions & & \checkmark \\
Derives costs from specification & & \checkmark \\
Proves correctness conditions & & \checkmark \\
Composition calculus with proofs & & \checkmark \\
Covers arbitrary new strategies & & \checkmark \\
\bottomrule
\end{tabular}
\end{table}

\textbf{Distinction from our work.} Table \ref{tab:comparison} summarizes the key differences. Prior work describes specific systems, empirical findings, or search-based optimization. We provide a systematic framework with definitions, derivation rules, and proofs. ZeRO describes an implementation; we provide semantics in which ZeRO Stages 1, 2, and 3 are instances differing only in placement specification. This enables systematic reasoning about properties that no single system paper addresses.

\section{Limitations}

Our framework assumes synchronous training. Asynchronous methods (e.g., PipeDream's weight stashing) introduce staleness that requires additional formalization.

We assume homogeneous devices. Heterogeneous systems (mixing GPU types) require per-device capability modeling that our current framework does not capture.

The derivation rules give asymptotic costs. Implementation constants (kernel launch overhead, memory allocator behavior) affect actual performance but are outside our scope. We model communication volume, not time; overlap between communication and computation is an implementation optimization not captured by our framework.

We model memory and communication but not compute time. A complete resource model would require operation-level analysis.

Expert parallelism (Mixture-of-Experts) requires extending the framework to handle conditional routing, where different inputs activate different parameter subsets. This extension is future work.

Sequence parallelism \cite{korthikanti2023reducing} fits our framework as $\pi_A = S$ for activations, with corresponding communication for activation sharding. Context parallelism (ring attention) requires modeling communication patterns within the attention operator, an extension we leave to future work.

Activation checkpointing \cite{chen2016training} is orthogonal to our framework: it reduces $|A|$ through recomputation but does not change the placement mode of activations. The framework applies unchanged with the reduced $|A|$.

Gradient accumulation (processing multiple micro-batches before synchronization) is a straightforward extension: communication costs are amortized over accumulation steps, reducing effective communication by a factor equal to the number of accumulation steps.

\section{Conclusion}

We introduced placement semantics, a systematic framework for distributed training. The framework defines five placement modes, derives memory and communication from specifications, proves correctness conditions, and provides composition rules.

By formalizing distributed training, we enable: (1) precise comparison of strategies via their specifications; (2) prediction of resource requirements without implementation; (3) verification of correctness via explicit conditions; (4) principled composition of strategies.

We hope this framework aids practitioners in understanding existing systems and researchers in designing new parallelism strategies with formal guarantees.

\section*{Acknowledgments}

We thank the developers of PyTorch, DeepSpeed, and Megatron-LM for open-source implementations that informed this analysis.

\end{document}